\newtheoremstyle{propstyle} 
    {3mm}                    
    {1mm}                    
    {\itshape}                   
    {}                           
    {\scshape}                   
    {.}                          
    {.5em}                       
    {}  
\theoremstyle{propstyle}
\newcommand{\comm}[1]{\ifthenelse{\boolean{comments}}{{\color{red}[[ #1 ]]}}{}}
\newtheorem{theorem}{Theorem}
\newtheorem*{definition*}{Definition}
\newcommand{\real}{\mathbb{R}}
\newcommand{\diag}{diag}
\newcommand{\GP}{GP}
\newcommand{\matern}{\mathcal{M}}
\newcommand{\bc}{\mathbf{c}}
\newcommand{\bs}{\mathbf{s}}
\newcommand{\by}{\mathbf{y}}
\newcommand{\bD}{\mathbf{D}}
\newcommand{\bfgamma}{\bm{\gamma}}
\newcommand{\bfSigma}{\bm{\Sigma}}
\newcommand{\sloc}{\bs}
\newcommand{\eloc}{\tilde{\bs}}
\newcommand{\sphere}{\mathbb{S}}
\newcommand{\domain}{\mathcal{D}}
\newcommand{\lon}{l}
\newcommand{\lat}{L}
\title{Locally anisotropic covariance functions on the sphere}
\author{Jian Cao\thanks{Department of Statistics, Texas A\&M University} \and Jingjie Zhang\footnotemark[1] \and Zhuoer Sun\footnotemark[1] \and Matthias Katzfuss\footnotemark[1] \thanks{Corresponding author: \texttt{katzfuss@gmail.com}}}
\date{}
\begin{document}

\maketitle

\begin{abstract}
Rapid developments in satellite remote-sensing technology have enabled the collection of geospatial data on a global scale, hence increasing the need for covariance functions that can capture spatial dependence on spherical domains. We propose a general method of constructing nonstationary, locally anisotropic covariance functions on the sphere based on covariance functions in $\real^3$. We also provide theorems that specify the conditions under which the resulting correlation function is isotropic or axially symmetric. For large datasets on the sphere commonly seen in modern applications, the Vecchia approximation is used to achieve higher scalability on statistical inference. The importance of flexible covariance structures is demonstrated numerically using simulated data and a precipitation dataset.
\end{abstract}

{\small\noindent\textbf{Keywords:} axial symmetry; local anisotropy; nonstationarity; global data; Vecchia approximation}

\vspace{4mm}

\section{Introduction}

Traditionally, geostatistical analysis relied on approximating small or regional spatial domains as flat subsets of $\real^2$. However, since the deployment of satellites in the collection of global data, there is an increasing demand for covariance functions that are valid on spheres. In this paper, we aim to propose a new family of spherical covariance functions, defined over the unit $2$-sphere $\sphere = \{\eloc \in \real^3: \|\eloc\| =1\}$, which are able to capture non-stationary features commonly observed in geostatistical datasets. 

For processes defined over $\sphere$, two different distance measures are commonly used, namely the great-arc (or great-circle) distance, which measures the distance ``going along the surface of the sphere'', and the Euclidean or chordal distance that ``pierces through the sphere.'' The relationship between great-arc distance and chordal distance on $\sphere$ is given by:
\begin{equation}
\label{equ:greatchordal}
 r = 2 \sin (\theta/2),
\end{equation}
where $r$ is the chordal distance between two points on $\sphere$, and $\theta$ is the corresponding great-arc distance. Although a function of chordal distance is naturally a function of great-circle distance, finding a valid correlation function directly based on great-arc distance is not trivial \citep[e.g.,][]{Jones1963} due to the curvature of $\sphere$. Most well-known covariance functions are valid (i.e., positive definite) on $\real^d$, for $d \geq 1$, only under Euclidean or Mahalanobis distance, yet may become invalid under great-arc distance \citep[e.g.,][]{Gneiting2012}. For example, the Mat\'ern covariance function is only valid under great-arc distance if its smoothness is no greater than 0.5 \citep{Gneiting2012}.

\citet{huang2011validity} summarized the validity of commonly used covariance functions under great-arc distance, most of which focused on isotropic covariance functions. \citet{Gneiting2012} further developed characterizations and constructions of isotropic positive definite functions on spheres, and proved that subject to a natural support condition, many isotropic positive-definite functions on the Euclidean space $\real^3$ allow for the direct substitution of the chordal distance by the great-arc distance on the sphere. \citet{ma2012stationary,ma2015isotropic} constructed a family of isotropic covariance functions with polynomials. Similarly, \citet{du2013isotropic} designed isotropic variogram functions on spheres using an infinite sum of the products of positive definite matrices and ultraspherical polynomials. \citet{alegria2021f} proposed the $\mathcal{F}$ family of isotropic covariance functions that parameterizes the differentiability of the Gaussian field, which was shown to outperform the Mat\'ern covariance function using chordal distance.

Popular approaches to modeling nonstationary or anisotropic processes on the sphere are reviewed in \citet{jeong2017spherical}, including the differential operators \citep{jun2007approach,Jun2008, jun2014matern},
spherical harmonic representations \citep{stein2007spatial},
stochastic partial differential equations \citep{lindgren2011explicit},
kernel convolutions \citep{heaton2014constructing}, and multi-step spectrum methods \citep{castruccio2013global,castruccio2014beyond, castruccio2016compressing}. \citet{Hitczenko2012} investigated the theoretical properties of a class of anisotropic processes on the sphere from applying first-order differential operators to an isotropic process. \citet{blakeparametric} proposed adaptions of three existing families of stationary spherical covariance functions, namely the construction from Stieltjes functions \citep{menegatto2020positive}, the $\mathcal{F}$ family \citep{alegria2021f}, and the spectral adaptive approach \citep{emery2021twenty}, to obtain nonstationarity. The above approaches typically require intricate design to be positive-definite and differentiable and are less intuitive for practitioners compared with the straight-forward covariance construction in $\real^3$ using the Euclidean distance. 

Here, we follow the idea of \citet{Yaglom1987}, restricting a valid covariance function in $\mathbb{R}^3$ to $\sphere$ under the chordal distance, which is guaranteed to be valid. \citet{Guinness2016} showed that results from using chordal and great-arc distances are often indistinguishable. This makes intuitive sense, in that when two points on the sphere will tend to be (almost) uncorrelated when they are far apart and will have approximately equal chordal distance and great-arc distance (i.e., $\sin(\theta/2) \approx \theta/2$) when they are close. 

Motivated by this observation, we propose a family of nonstationary, locally anisotropic covariance functions on the sphere based on the locally anisotropic covariance functions for Euclidean space proposed in \citet{Paciorek2006}. Similar ideas were also discussed in \citet{Katzfuss2011d} and \citet{Knapp2012}. We will introduce the properties of our general covariance parameterization and specific parameterizations that lead to isotropic or axially symmetric covariance structures to suit various geostatistical applications. For large datasets on the sphere (e.g., with more than $10^4$ points), straightforward computation of the Gaussian log-likelihood is too expensive for statistical inference, for which we use the Vecchia approximation \citep{Vecchia1988} of Gaussian processes (GPs) in our numerical studies. 

The remainder of this article is organized as follows. Section \ref{sec:review_nonsta} reviews a nonstationary correlation function on $\real^d$. In Section \ref{sec:cov_sphere}, we construct classes of nonstationary covariance functions on the sphere, and provide theorems that specify the conditions for isotropic and axially symmetric covariance structures. Section \ref{sec:Vecchia} reviews the Vecchia approximation for large spatial datasets. In Sections~\ref{sec:simulation} and \ref{sec:real_app}, we use simulated data and a precipitation dataset from a physical model to highlight the advantage of our flexible nonstationary covariance structure. Section \ref{sec:conclusions} concludes. Proofs are provided in the Appendix. Code can be found at \url{https://github.com/katzfuss-group/sphere-local-aniso-cov/}.

\section{Locally anisotropic covariance functions \label{sec:review_nonsta}}

In this section, we briefly review an intuitive construction for nonstationarity proposed in
\citet{Paciorek2006} based on any isotropic correlation function, denoted by $\rho$, in $\real^d$ for all $d \in \mathbb{N}$. Specifically, the nonstationary correlation function is composed as:
\begin{align}
    \label{equ:general_cov_design}
    \rho_{NS}(\bs_i,\bs_j) &= c(\bs_i,\bs_j) \rho\big( q(\bs_i,\bs_j) \big), \\
    \label{equ:mahalanobis}
    q(\bs_i,\bs_j) &= \{ 2 (\bs_i - \bs_j)' (\bfSigma(\bs_i) + \bfSigma(\bs_j) )^{-1} (\bs_i - \bs_j) \}^{1/2}, \\
    \label{equ:cov_const_term}
    c(\bs_i,\bs_j) &= |\bfSigma(\bs_i)|^{1/4} |\bfSigma(\bs_j)|^{1/4} | (\bfSigma(\bs_i) + \bfSigma(\bs_j))/2|^{-1/2},
\end{align}
where the positive-definite $d \times d$ matrix $\bfSigma(\bs_i)$ is the local anisotropy matrix describing the spatially varying rotation and scaling, $q(\bs_i,\bs_j)$ is the Mahalanobis distance with respect to the average anisotropy matrix $(\bfSigma(\bs_i) + \bfSigma(\bs_j) )/2$, and $c(\bs_i,\bs_j)$ is the normalization term. The anisotropy matrix $\bfSigma(\bs)$ at each location needs to be positive definite matrices in $\real^{d \times d}$, to which we assign the local rotational and scaling effects at the location $\bs$. Hence, spatially-varying anisotropy is achieved in the covariance structure represented by Equations~\eqref{equ:general_cov_design}, \eqref{equ:mahalanobis}, and \eqref{equ:cov_const_term}.

This nonstationarity design combines different local anisotropic correlation structures into a valid global correlation function, leading to greater model expressiveness. Under the assumption that the anisotropy matrix $\bfSigma(\bs)$ varies smoothly across the domain, the differentiability of $\rho_{NS}$ follows that of the underlying stationary covariance function $\rho$ \citep{Paciorek2006}. When $\rho$ is chosen as the isotropic Mat\'ern covariance function, one may also vary the smoothness parameter across the domain to achieve different local smoothness; see Section~\ref{subsec:nonstationary_matern} for a detailed description.


\section{Classes of nonstationary covariance functions on the  sphere\label{sec:cov_sphere}}

Given the intuitive construction for nonstationarity in Section~\ref{sec:review_nonsta}, an important question is how to parameterize the anisotropy matrix $\bfSigma(\bs)$ to better capture the local covariance structure, which can be largely problem dependent. In this section, we consider one general parameterization of $\bfSigma(\bs)$ that represents local rotation and scaling on a tangent plane of the sphere and probe the conditions for achieving the special cases of isotropic and axially symmetric covariance functions.

\subsection{Construction of the covariance functions\label{subsec:construct}}

Without loss of generality, assume that $\sphere$ is centered at the origin in $\real^3$, $(0,0,0)$, and that the intersection of the prime meridian and the equator, denoted by $\bc \colonequals (0,0)$ ($0^\circ$ longitude, $0^\circ$ latitude), is located on the x-axis (i.e., it has the Euclidean coordinates $\tilde{\bc} \colonequals (1,0,0)$). Figure~\ref{fig:sphere} shows the part of the sphere that lies in the first (positive) octant of a Cartesian coordinate system, including the origin and $\bc$. The Euclidean coordinates $\eloc=(x,y,z)$ of any point $\sloc = (\lon,\lat)$ with longitude $\lon$ and latitude $\lat$ on $\sphere$ are given by:
\begin{equation*}
 x  = \cos(\lat) \cos(\lon), \quad y  = \cos(\lat) \sin(\lon), \quad z  = \sin(\lat).
\end{equation*}

\begin{figure}
\centering\includegraphics[width=.6\textwidth]{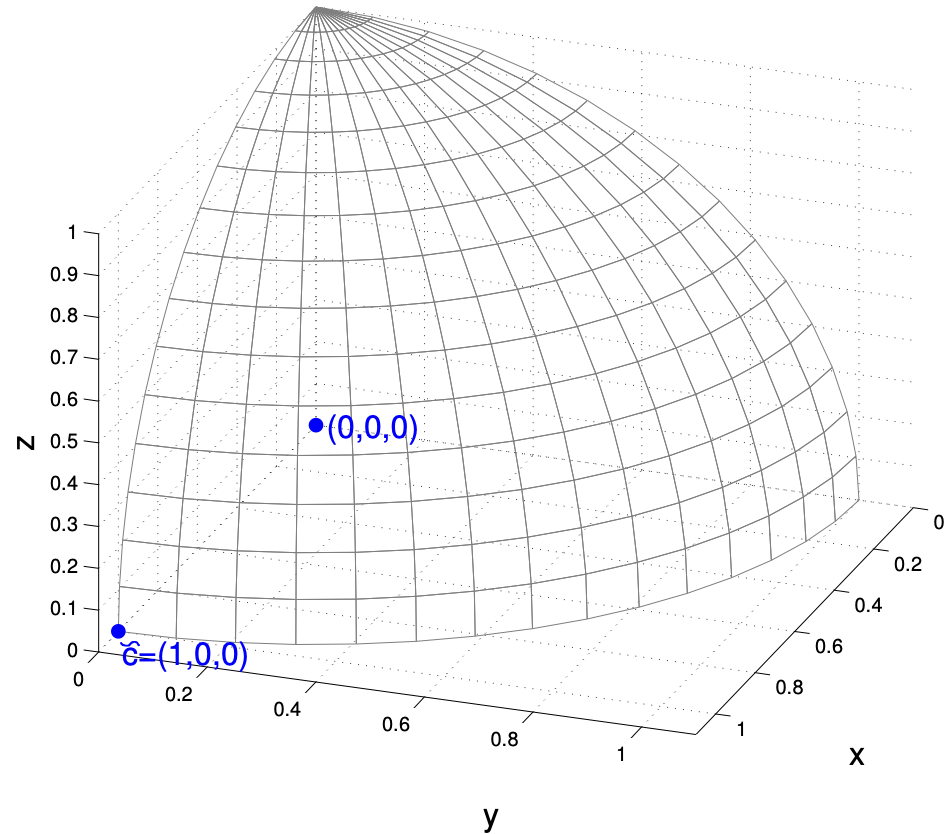}
\caption{\label{fig:sphere}The part of a unit sphere centered at the origin that lies in the first octant of the Cartesian coordinate system, where all coordinates are positive. The origin and the point $\bc$ referred to in the text are shown in blue.} 
\end{figure}

For $d$-dimensional Euclidean space, we can parameterize $\bfSigma(\bs)$ using $d$ scaling parameters and $d-1$ rotation parameters \citep[see, e.g.,][]{Banerjee2008}. Although $\sphere$ ``lives'' in $\mathbb{R}^3$, the surface of $\sphere$ is (locally) a two-dimensional space (i.e., the tangent plane) at any point $\bs \in \sphere$, indicating a parameterization with only two local scaling and one local rotation parameters. Consider the tangent plane at $\bc$, which is the $(y,z)$-plane, as shown in Figure \ref{fig:sphere}. We use $\gamma_1(\bc) >0$ and $\gamma_2(\bc) > 0$ as the scaling parameters in the $y$ and $z$-directions, respectively, and $\kappa(\bc) \in [0, \pi/2)$ as the rotation parameter, whose collective effect is shown in Figure \ref{fig:rotationillus}. Specifically, the scaling matrix at $\bc$ is a diagonal matrix $\bD(\bfgamma) \colonequals \diag \{1, \gamma_1,\gamma_2 \}$ and the rotation matrix that rotates the $(y,z)$-plane at $\tilde{\bc}$ about the $x$-axis is given by:
\[
   \mathcal{R}_x(\kappa) \colonequals \left( \begin{array}{ccc} 1 &0 &0\\ 0 & \cos \kappa &-\sin \kappa\\ 0 & \sin \kappa  & \cos \kappa \end{array} \right).
\]
\begin{figure}
\centering\includegraphics[width=.5\textwidth]{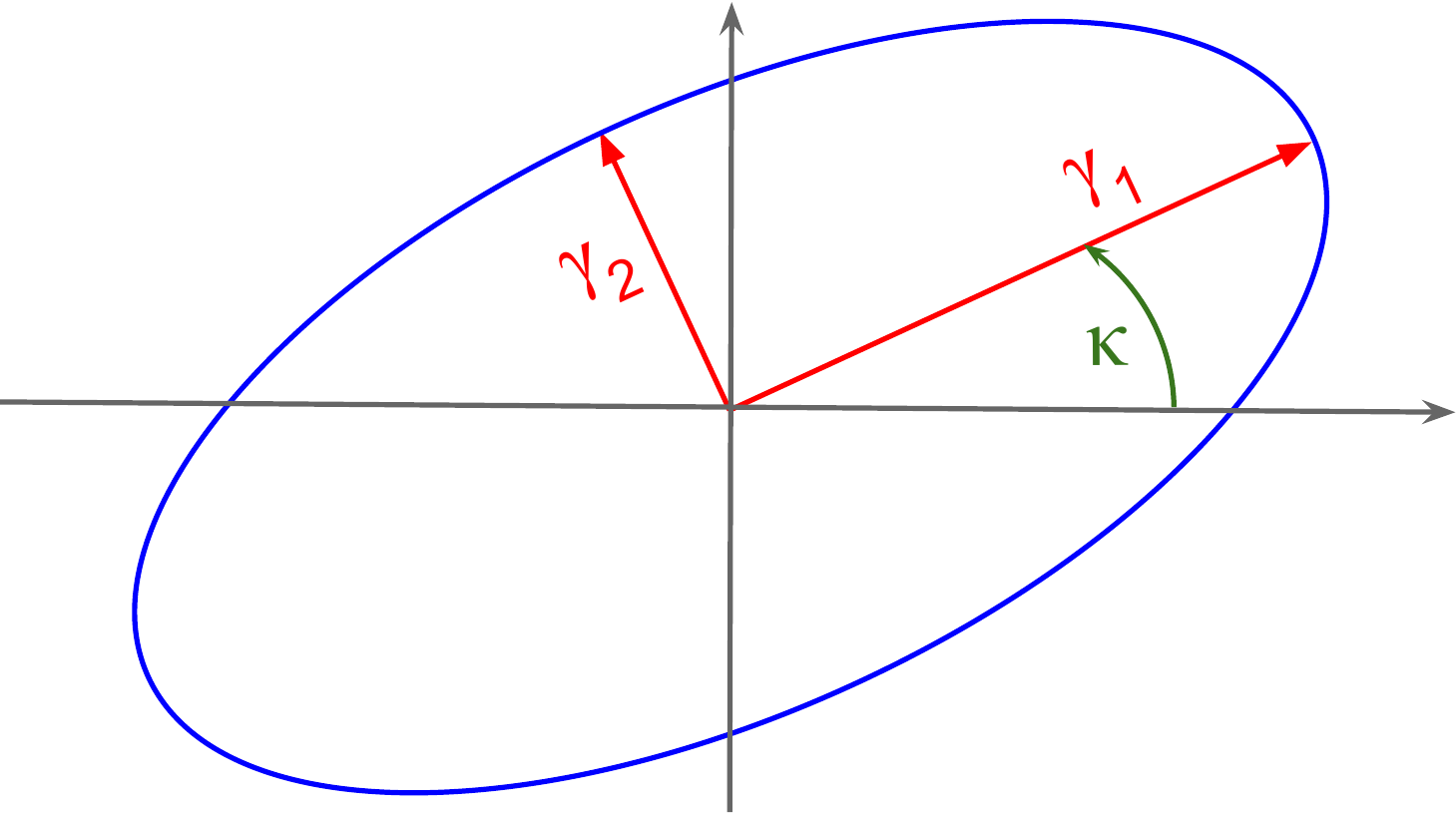}
\caption{\label{fig:rotationillus}Illustration of the scaling parameters $\gamma_1$ and $\gamma_2$ and the rotation parameter $\kappa$ at reference point $\tilde\bc$, along with an oval representing the resulting correlation contour} 
\end{figure}
The anisotropy matrix at the reference point $\tilde{\bc}$ (spherical coordinates $\bc$) is then given by:
\begin{equation}
\label{third:refanisotropy}
 \tilde{\bfSigma}(\bc) \colonequals \mathcal{R}_x(\kappa(\bc)) \bD(\bfgamma(\bc)) \mathcal{R}_x(\kappa(\bc))'.
\end{equation}

Consider an arbitrary location $\sloc$ on the $\sphere$, whose Euclidean coordinates are denoted by $\eloc$. The way we construct the anisotropy matrix for $\sloc$, denoted by $\bfSigma(\sloc)$, is to left- and right-multiply \eqref{third:refanisotropy} by the rotation matrix $\mathcal{R}(\sloc)$ such that $\eloc' \bfSigma(\sloc)^{-1} \eloc = \eloc' (\mathcal{R}(\sloc) \tilde{\bfSigma}(\sloc) \mathcal{R}(\sloc)')^{-1} \eloc =  \tilde{\bc}' \tilde{\bfSigma}(\sloc)^{-1} \tilde{\bc}$. Here, $\tilde{\bfSigma}(\sloc)$ is parameterized by $\kappa(\sloc)$ and $\bfgamma(\sloc)$, similar to \eqref{third:refanisotropy}, denoting the local rotation and scaling at $\sloc$ measured at the reference point $\bc$. To rotate $\sloc$ to $\bc$, we can rotate $\eloc$ about the $y$-axis and $z$-axis by $-\lat$ and $-\lon$, respectively, which is equivalent to left-multiplication of $\tilde{\bc}$ by $\mathcal{R}_y(-\lat) \mathcal{R}_z(-\lon)$, where
\[
\mathcal{R}_y(\theta) \colonequals \left( \begin{array}{ccc} \cos \theta &0 & \sin \theta\\ 0 & 1 & 0\\ -\sin \theta & 0 & \cos \theta \end{array} \right) \quad\quad \textrm{and} \quad\quad
\mathcal{R}_z(\theta) \colonequals \left( \begin{array}{ccc} \cos \theta &-\sin\theta & 0\\ \sin \theta & \cos\theta & 0\\ 0 & 0 & 1 \end{array} \right).
\]
Hence, $\mathcal{R}(\sloc)$ and $\bfSigma(\sloc)$ are defined as $\mathcal{R}_z(\lon) \mathcal{R}_y(\lat)$ and $\mathcal{R}_z(\lon) \mathcal{R}_y(\lat) \tilde{\bfSigma}(\sloc) \mathcal{R}_y(\lat)' \mathcal{R}_z(\lon)'$, respectively. Therefore, at an arbitrary location $\sloc$, we have:
\begin{align*}
    \eloc' \bfSigma(\sloc)^{-1} \eloc = \tilde{\bc}' \tilde{\bfSigma}(\sloc)^{-1} \tilde{\bc} = \tilde{\bc}' \Big(\mathcal{R}_x(\kappa(\sloc)) \bD(\bfgamma(\sloc)) \mathcal{R}_x(\kappa(\sloc))'\Big)^{-1} \tilde{\bc}.
\end{align*}

The anisotropy matrix $\bfSigma(\sloc)$ achieves nonstationarity through introducing local rotation and scaling. One can further increasing the nonstationarity through assuming heterogeneous variances in the domain $\sigma^2(\sloc) >0$, with which the covariance function between two locations $\sloc_i$ and $\sloc_j$ amounts to:
\[
C(\sloc_i,\sloc_j) = \sigma(\sloc_i)\sigma(\sloc_j)\rho_{NS}(\sloc_i,\sloc_j),
\]
where $\rho_{NS}(\sloc_i,\sloc_j)$ is defined by \eqref{equ:general_cov_design} to \eqref{equ:cov_const_term} through the anisotropy matrix $\bfSigma(\sloc)$.

\subsection{Properties}

The approach above provides a parameterization of $\bfSigma(\sloc)$ in terms of two spatially varying ranges, $\bfgamma(\sloc)$, and one spatially varying rotation, $\kappa(\sloc)$, which can in turn be parameterized in suitable ways for different applications. In this section, we provide conditions on $\bfgamma(\sloc)$ and $\kappa(\sloc)$ such that the resulting correlation function in \eqref{equ:general_cov_design} is isotropic or axially symmetric; proofs of the theorems are included in Appendix \ref{app:proofs}.

An isotropic covariance function is a function of only distance between two locations. Due to the one-to-one relationship between the great-arc and the chordal distances in \eqref{equ:greatchordal}, isotropic covariance functions on the sphere are isotropic with respect to both chordal and great-arc distance. By adding constraints to the scaling parameters $\gamma_1(\sloc)$ and $\gamma_2(\sloc)$, the correlation function $\rho_{NS}$ in \eqref{equ:general_cov_design} can achieve isotropy on the sphere:

\begin{theorem}
\label{theo:isotropy}
The correlation function $\rho_{NS}$ in \eqref{equ:general_cov_design} is isotropic (i.e., depends only on distance) if $\gamma_1 ({\sloc})=\gamma_2 ({\sloc}) \equiv \gamma$ is constant.
\end{theorem}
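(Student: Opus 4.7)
The plan is to exploit the axial symmetry that appears in $\tilde\bfSigma(\sloc)$ once $\gamma_1=\gamma_2=\gamma$, reduce $\bfSigma(\sloc)$ to a transparent rank-one perturbation of $\gamma\bI$, and then show that every quantity entering \eqref{equ:general_cov_design}--\eqref{equ:cov_const_term} depends on $(\sloc_i,\sloc_j)$ only through the chordal distance $r=\|\eloc_i-\eloc_j\|$.

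First, I would observe that the diagonal scaling matrix becomes $\bD(\bfgamma)=\mathrm{diag}(1,\gamma,\gamma)=\gamma\bI+(1-\gamma)\be_1\be_1'$, where $\be_1=(1,0,0)'$. Since $\mathcal{R}_x(\kappa)$ fixes $\be_1$, the rotation $\mathcal{R}_x(\kappa(\sloc))$ commutes with this matrix, so by \eqref{third:refanisotropy} the local rotation cancels and $\tilde\bfSigma(\sloc)=\mathrm{diag}(1,\gamma,\gamma)$ regardless of $\kappa(\sloc)$. Applying $\mathcal{R}(\sloc)=\mathcal{R}_z(\lon)\mathcal{R}_y(\lat)$ on the left and right, and using $\mathcal{R}(\sloc)\be_1=\eloc$, yields the clean formula
\[
\bfSigma(\sloc)=\gamma\bI+(1-\gamma)\eloc\eloc',
\]
which immediately gives the constant determinant $|\bfSigma(\sloc)|=\gamma^{2}$.

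Next, I would analyze $\bar\bfSigma\colonequals\tfrac12(\bfSigma(\sloc_i)+\bfSigma(\sloc_j))=\gamma\bI+\tfrac{1-\gamma}{2}(\eloc_i\eloc_i'+\eloc_j\eloc_j')$. The key trick is to note that the difference vector $\bd=\eloc_i-\eloc_j$ is an eigenvector of $\eloc_i\eloc_i'+\eloc_j\eloc_j'$ with eigenvalue $1-\eloc_i'\eloc_j=r^{2}/2$, hence of $\bar\bfSigma$ with eigenvalue $\gamma+(1-\gamma)r^{2}/4$. Therefore
\[
q(\sloc_i,\sloc_j)^{2}=\bd'\bar\bfSigma^{-1}\bd=\frac{r^{2}}{\gamma+(1-\gamma)r^{2}/4},
\]
a function of $r$ alone. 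For the normalization term, the remaining eigenvalues of $\bar\bfSigma$ (in the two-dimensional subspace $\mathrm{span}(\eloc_i,\eloc_j)$ and its complement) are readily found to be $\gamma+(1-\gamma)(1\pm\eloc_i'\eloc_j)/4$ and $\gamma$, so $|\bar\bfSigma|$ also depends only on $\eloc_i'\eloc_j=1-r^{2}/2$, and hence so does $c(\sloc_i,\sloc_j)$ by \eqref{equ:cov_const_term}. Combining with the fact that $\rho$ is already a function of its scalar argument, $\rho_{NS}(\sloc_i,\sloc_j)$ depends only on $r$ and therefore (by \eqref{equ:greatchordal}) only on great-arc distance.

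The only delicate step is the spectral computation: I would need to recognize, rather than just verify by brute force, that $\bd$ diagonalizes the common rank-two perturbation $\eloc_i\eloc_i'+\eloc_j\eloc_j'$. Once this is seen, the remainder of the argument is purely algebraic. A minor care point is the sign/orientation convention for $\mathcal{R}_y$, which must be checked so that $\mathcal{R}(\sloc)\be_1=\eloc$ holds exactly; this is the only place where the geometric setup of Section~\ref{subsec:construct} feeds into the proof.
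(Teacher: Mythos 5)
Your proposal is correct, and its second half takes a genuinely different route from the paper. The setup is identical: you both observe that $\mathcal{R}_x(\kappa)$ fixes $\be_1=(1,0,0)'$ so the local rotation cancels, and you both arrive at the key representation $\bfSigma(\sloc)=\gamma\bI_3+(1-\gamma)\eloc\eloc'$ with $|\bfSigma(\sloc)|=\gamma^2$. From there the paper proceeds by applying the Sherman--Morrison formula twice to expand $(\bfSigma(\sloc_i)+\bfSigma(\sloc_j))^{-1}$ and then checks that every inner product appearing in $q^2$ reduces to a function of $\eloc_i'\eloc_j$; for the normalizer $c$ it uses a rather awkward chain of proportionalities in \eqref{eq:cproof} (which formally passes through the determinant of a rank-one matrix). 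Your spectral argument is cleaner and buys closed forms: the observation that $\eloc_i-\eloc_j$ is an eigenvector of $\eloc_i\eloc_i'+\eloc_j\eloc_j'$ with eigenvalue $1-\eloc_i'\eloc_j=r^2/2$ gives $q^2=r^2/\{\gamma+(1-\gamma)r^2/4\}$ in one line, and diagonalizing $\bar\bfSigma$ on $\mathrm{span}(\eloc_i\pm\eloc_j)$ and its complement yields $|\bar\bfSigma|$ as an explicit function of $\eloc_i'\eloc_j$, which handles $c$ without the paper's detour. Two small points: the eigenvalues of $\bar\bfSigma$ in the span should read $\gamma+(1-\gamma)(1\pm\eloc_i'\eloc_j)/2$ rather than $/4$ (your $/4$ is correct only after substituting $1-\eloc_i'\eloc_j=r^2/2$), though this does not affect the conclusion; and the sign issue you flag for $\mathcal{R}_y$ is real --- with the paper's convention $\mathcal{R}(\sloc)\be_1$ has third coordinate $-\sin\lat$ --- but it is harmless here, since reflecting both points through the $(x,y)$-plane preserves all inner products and distances entering the argument.
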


A subclass of covariance functions that are especifically useful for spherical domains are axially symmetric covariance functions \citep[e.g.,][]{stein2007spatial}, under which the correlation between a pair of locations on the sphere depends on longitudes only through the longitude difference:

\begin{definition*}
A covariance function $C: \sphere \times \sphere \rightarrow \real$ is called axially symmetric if there exists a function $C_A$ such that
\[
C(\sloc_i,\sloc_j) = C_A(\lon_i -\lon_j, \lat_i, \lat_j),
\]
where $\sloc_i = (\lon_i,\lat_i)$ and $\sloc_j = (\lon_j,\lat_j)$ with longitudes $\lon_i$ and $\lon_j$ and latitudes $\lat_i$ and $\lat_j$ on $\sphere$.
\end{definition*}

Axially symmetric covariane functions can be also obtained based on the Theorem below:

\begin{theorem}
\label{theo:symmetry}
The correlation function $\rho_{NS}$ in \eqref{equ:general_cov_design} is axially symmetric if $\kappa({\sloc}) \equiv 0$ and $\gamma_1(\cdot)$ and $\gamma_2(\cdot)$ are functions of latitude only (i.e., they do not depend on longitude).
\end{theorem}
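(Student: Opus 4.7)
The plan is to show that $\rho_{NS}(\sloc_i,\sloc_j)$ is invariant under the group of common longitude shifts $(\lon_i,\lon_j)\mapsto(\lon_i+\alpha,\lon_j+\alpha)$ for every $\alpha\in\real$. Since $(\lon_i-\lon_j,\lat_i,\lat_j)$ is a complete invariant of this group action on $\sphere\times\sphere$, such invariance is equivalent to the existence of a function $C_A$ as in the definition of axial symmetry. Because $\rho_{NS}=c\cdot\rho(q)$, it suffices to verify the invariance separately for $q(\sloc_i,\sloc_j)$ and $c(\sloc_i,\sloc_j)$.

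First, I would record two bookkeeping facts. The longitude shift by $\alpha$ sends the Euclidean coordinates to $\eloc(\lon+\alpha,\lat)=\mathcal{R}_z(\alpha)\eloc(\lon,\lat)$, which follows directly from the coordinate formulas for $\eloc$ and the definition of $\mathcal{R}_z$. Second, using $\mathcal{R}_z(\lon+\alpha)=\mathcal{R}_z(\alpha)\mathcal{R}_z(\lon)$ in the construction $\bfSigma(\sloc)=\mathcal{R}_z(\lon)\mathcal{R}_y(\lat)\tilde{\bfSigma}(\sloc)\mathcal{R}_y(\lat)'\mathcal{R}_z(\lon)'$, the shifted anisotropy matrix becomes
\[
\bfSigma(\lon+\alpha,\lat)=\mathcal{R}_z(\alpha)\,\mathcal{R}_z(\lon)\mathcal{R}_y(\lat)\tilde{\bfSigma}(\lon+\alpha,\lat)\mathcal{R}_y(\lat)'\mathcal{R}_z(\lon)'\,\mathcal{R}_z(\alpha)'.
\]
The hypotheses enter here: $\kappa(\sloc)\equiv 0$ reduces $\tilde{\bfSigma}(\sloc)$ to $\bD(\bfgamma(\sloc))$, and the assumption that $\gamma_1,\gamma_2$ depend only on $\lat$ makes $\tilde{\bfSigma}(\lon+\alpha,\lat)=\tilde{\bfSigma}(\lon,\lat)$. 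Consequently $\bfSigma(\lon+\alpha,\lat)=\mathcal{R}_z(\alpha)\bfSigma(\sloc)\mathcal{R}_z(\alpha)'$.

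Combining the two facts, under the simultaneous shift both $\eloc_i-\eloc_j$ and $\bfSigma(\sloc_i)+\bfSigma(\sloc_j)$ are conjugated by $\mathcal{R}_z(\alpha)$. Since $\mathcal{R}_z(\alpha)$ is orthogonal, the quadratic form $q^2(\sloc_i,\sloc_j)=2(\eloc_i-\eloc_j)'(\bfSigma(\sloc_i)+\bfSigma(\sloc_j))^{-1}(\eloc_i-\eloc_j)$ is unchanged, and each determinant in $c(\sloc_i,\sloc_j)$ is unchanged as well. Hence $\rho_{NS}$ is invariant under common longitude shifts, which gives axial symmetry.

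The main obstacle, in my view, is simply being careful about the order of rotations in the definition of $\bfSigma(\sloc)$ and identifying precisely where each hypothesis is used: the condition $\kappa\equiv 0$ is needed to strip off the $\mathcal{R}_x$ factor (which otherwise fails to commute with a $z$-axis rotation applied through $\mathcal{R}_y$), while the latitude-only dependence of $\bfgamma$ is what makes $\tilde{\bfSigma}$ literally invariant under changes in $\lon$. Everything else reduces to the orthogonality of $\mathcal{R}_z(\alpha)$, so no detailed calculation with the Mat\'ern kernel or with $\rho$ is required.
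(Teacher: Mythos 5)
Your proof is correct, and it takes a genuinely different route from the paper's. The paper argues by direct computation: under the hypotheses it writes $\bfSigma(\sloc)=(1-\gamma_1(\lat))\eloc\eloc'+\gamma_1(\lat)\bI_3+(\gamma_2(\lat)-\gamma_1(\lat))\eloc^*(\eloc^*)'$ for an auxiliary unit vector $\eloc^*$, then uses Woodbury/Sylvester identities to reduce $|\bfSigma(\sloc)|$, $q^2$, and $c$ to a short list of inner products, each of which it checks is a function only of the latitudes and of $(\eloc_i-\eloc_j)'(\eloc_i-\eloc_j)$. You instead prove invariance of $\rho_{NS}$ under the one-parameter group of simultaneous longitude shifts, using that such a shift acts on Euclidean coordinates as $\mathcal{R}_z(\alpha)$ and on the anisotropy matrices by orthogonal conjugation $\bfSigma\mapsto\mathcal{R}_z(\alpha)\bfSigma\mathcal{R}_z(\alpha)'$, so that the quadratic form $q^2$ and all determinants in $c$ are preserved; identifying $(\lon_i-\lon_j,\lat_i,\lat_j)$ as a complete invariant of the action then gives the function $C_A$. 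Your argument is shorter, avoids all matrix-inversion bookkeeping, and as a bonus makes visible that the hypothesis $\kappa\equiv 0$ is stronger than necessary: all your argument uses is that $\tilde{\bfSigma}(\lon+\alpha,\lat)=\tilde{\bfSigma}(\lon,\lat)$, which holds whenever $\kappa(\cdot)$, like $\gamma_1(\cdot)$ and $\gamma_2(\cdot)$, depends on latitude only. For that reason your parenthetical claim that $\kappa\equiv 0$ is ``needed to strip off the $\mathcal{R}_x$ factor'' is slightly misleading --- the $\mathcal{R}_x(\kappa)$ factor sits inside $\tilde{\bfSigma}$ and never has to commute with anything in your argument --- but this does not affect the validity of the proof under the stated hypotheses. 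What the paper's longer computation buys in exchange is an explicit expression for the quantities involved (e.g., $|\bfSigma(\sloc)|=\gamma_1\gamma_2-4\gamma_1^3\sin^2(\lat)\cos^2(\lat)$), which your symmetry argument does not produce.
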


\begin{figure}
\centering
\hspace*{\fill}
	\begin{subfigure}{.32\textwidth}
	\includegraphics[width=1\linewidth]{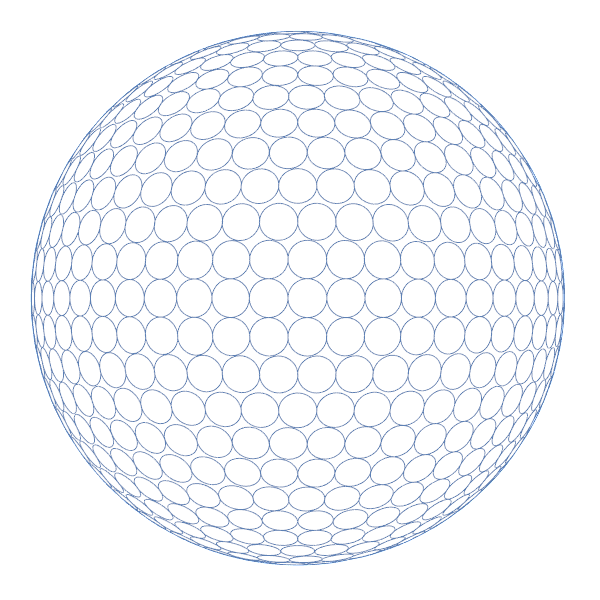}
	\caption{Isotropy ($\gamma_1 = \gamma_2$)}
	\end{subfigure}
	\hfill
	\begin{subfigure}{.32\textwidth}
		\includegraphics[width=1\linewidth]{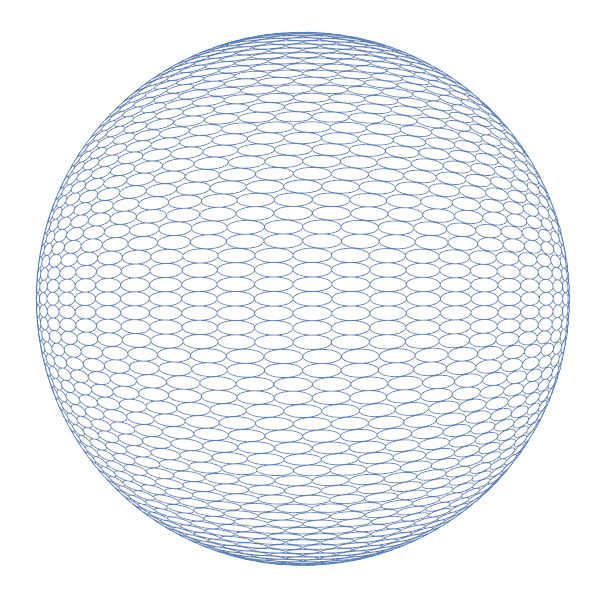}
	\caption{``Anisotropy'' ($\kappa = 0; \gamma_1 > \gamma_2$)}
	\end{subfigure}
\hspace*{\fill}

\hspace*{\fill}
	\begin{subfigure}{.32\textwidth}
	\includegraphics[width=1\linewidth]{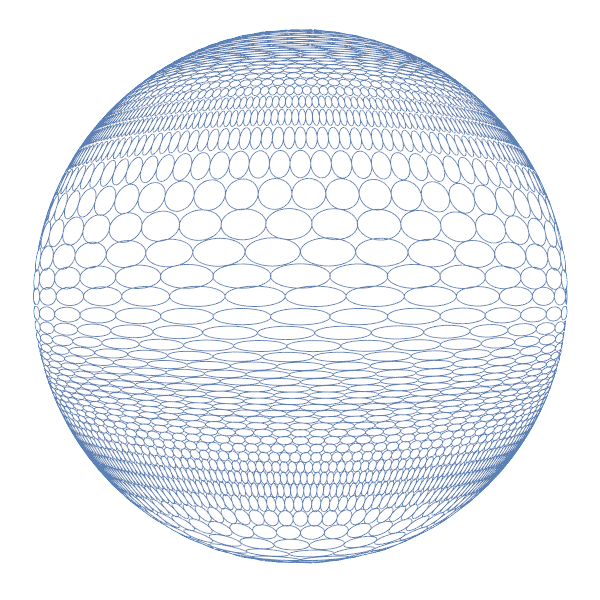}
	\caption{Axial symmetry}
	\end{subfigure}
	\hfill
	\begin{subfigure}{.32\textwidth}
	\includegraphics[width=1\linewidth]{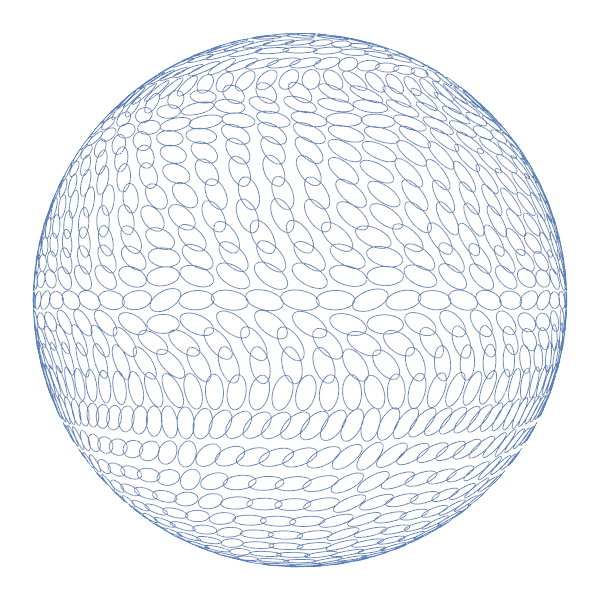}
    \caption{General nonstationarity with spatially varying rotation $\kappa(\cdot)$}
	\end{subfigure}
\hspace*{\fill}
\caption{Illustration of special cases of the nonstationary correlation functions in \eqref{equ:general_cov_design} via correlation contours}
\label{fig:illus_anisotropy}
\end{figure}
Special cases of our general nonstationary covariance function that include isotropic, anisotropic, axially symmetric, and general nonstationary parameterizations are visualized in Figure \ref{fig:illus_anisotropy}.

\subsection{Example: A nonstationary Mat\'{e}rn covariance on the sphere}
\label{subsec:nonstationary_matern}

The Mat\'{e}rn correlation function is highly popular in geospatial analysis. It is valid in $\real^d$ for any $d \in \mathbb{N}$ and given by 
\[
    \matern_\nu(r) = \textstyle\frac{2^{1-\nu}}{\Gamma(\nu)} r^\nu \mathcal{K}_\nu(r), \qquad r \geq 0,
\]
where $ \mathcal{K}_\nu( \cdot) $ is the modified Bessel function of order $\nu >0$. The standard deviation, say  $\sigma(\cdot)$, and the smoothness parameter $\nu(\cdot)$ in the Mat\'ern can also vary over space \citep{Stein2005}. Hence, we can obtain a highly flexible Mat\'ern covariance on the sphere of the form
\begin{equation}
\label{equ:nonstatmatcor}
  \matern_{NS}( \sloc_i, \sloc_j ) = \sigma(\sloc_i)\sigma(\sloc_j)c(\sloc_i,\sloc_j) \matern_{ (\nu(\sloc_i) + \nu(\sloc_j))/2}  (q(\sloc_i,\sloc_j)),
\end{equation}
where $c, q$ are as in \eqref{equ:mahalanobis}--\eqref{equ:cov_const_term}, and $\bfSigma(\bs)$ can be parameterized in terms of spatially varying scales $\bfgamma(\bs)$ and rotation $\rho(\bs)$ as in Section \ref{subsec:construct}.

\citet{Guinness2016} showed that the local smoothness properties of the Mat\'ern covariance are preserved when restricting a process in Euclidean space to the sphere. Specifically, a GP with covariance function $\matern_{NS}$ has $m$ mean square derivatives at $\sloc$ if and only if $\nu(\sloc) > m$. 

\section{Vecchia approximation\label{sec:Vecchia}}

For many modern large datasets, including on the sphere, direct application of GPs is too computationally expensive, as the cost scales cubically in the number of data points. The approximation proposed by \citet{Vecchia1988} has become highly popular in recent years, which has linear computational complexity and straight-forward parallel features while maintaining high accuracy measured by the KL divergence from the true process \citep[e.g.,][]{Guinness2016a,Katzfuss2017a}. Based on a given ordering of the observations, the Vecchia approximation replaces the high-dimensional joint multivariate normal density with a product of univariate conditional normal densities, in which each variable conditions only on a small subset of previous observations in the ordering, amounting to an ordered conditionally independent approximation.

Denote $h(i) = \{1,2,...,i-1\}$ with $h(1) = \emptyset$ and $\by_{h(i)} = (y_1, ..., y_{i-1})'$. Consider a GP $y(\cdot) \sim \GP(0,C)$ on a spatial region $\domain$ with covariance function $C$. The distribution of the observation $\by = (y_1, y_2, ...,y_n)$ is given by 
\[
f(\by) = \prod_{i=1}^{n}   f (y_i |\by_{h(i)}).
\]
The Vecchia approximation replaces $h(i)$ with a subset $g(i) \subset h(i)$, where $g(i)$ is usually chosen to select those indices corresponding to the $m$ observations nearest in distance to the $i$th observation. This leads to the following approximation of the joint density:
\begin{equation}
\hat{f}(\by) = \prod_{i=1}^{n}   f (y_i |\by_{g(i)}). \label{eq:vecchia}    
\end{equation}

The Vecchia approximation ensures computational feasibility for large spatial datasets. The choices for ordering the locations and selecting the conditioning sets $\{g(i)\}_{i = 1}^{n}$ are typically based on distance or estimated correlation \citep{Katzfuss2017a}. Here, we will use maximum-minimum-distance ordering \citep{Guinness2016a} and nearest-neighbor conditioning in the numerical studies, both based on the Euclidean distances. Correlation-based ordering and conditioning that takes into account the potential nonstationary structure is also possible \citep{Katzfuss2020,Kang2021}. Aside from likelihood-based parameter inference based on the Vecchia likelihood in \eqref{eq:vecchia}, the Vecchia approximation can also be applied to unknown locations in order to obtain accurate approximations of posterior predictive distributions \citep[e.g.,][]{Katzfuss2018}. In the case of noisy data, the Vecchia approximation can be applied to the latent (noise-free) GP as before, and then combined with an incomplete-Cholesky decomposition of the posterior precision matrix to preserve the low computational complexity \citep{Schafer2020}.
We implemented Vecchia inference based on our new covariance function by extending the \texttt{R} package \texttt{GPvecchia} \citep{GPvecchia}.

\section{Numerical study\label{sec:simulation}}

We perform simulations to demonstrate the improvement of posterior inference gained from adopting a more flexible covariance structure. Specifically, we simulate GPs that are isotropic, axially symmetric, and generally nonstationary based on different parameterizations of the covariance structure introduced in \eqref{equ:nonstatmatcor}. The scaling parameters $\gamma_1(\sloc)$ and $\gamma_2(\sloc)$ are parameterized as:
\begin{equation}
\gamma_1(\sloc) = \exp(\beta_{10} + \beta_{11} \sin (l) + \beta_{12}L),
\label{eq:gamma1}
\end{equation}
\begin{equation}
\gamma_2(\sloc) = \exp(\beta_{20} + \beta_{21} \sin (l) + \beta_{22}L),
\label{eq:gamma2}
\end{equation}
where $\sloc=(l,L)$ with longitude $l$ and latitude $L$. Based on Theorems~\ref{theo:isotropy} and \ref{theo:symmetry}, isotropic, axially symmetric, and general nonstationary covariance structures are constructed as follows:
\begin{description}
\item[Isotropy:]
According to Theorem \ref{theo:isotropy}, the correlation function $\rho_{NS}$ in \eqref{equ:general_cov_design} is isotropic if $\gamma_1 ({\sloc})=\gamma_2 ({\sloc}) \equiv \gamma$ is constant, and so we set the parameters in \eqref{eq:gamma1}--\eqref{eq:gamma2} as 
\[
\beta_1 =(\beta_{10},\beta_{11},\beta_{12}  )= (-0.5,0,0),
\]
\[
\beta_2 =(\beta_{20},\beta_{21},\beta_{22}  )=(-0.5,0,0).
\]

\item[Axial symmetry:]
According to Theorem \ref{theo:symmetry}, the correlation function $\rho_{NS}$ in \eqref{equ:general_cov_design} is axially symmetric if $\kappa({\sloc}) \equiv 0$ and $\gamma_1(\cdot)$ and $\gamma_2(\cdot)$ are functions of latitude only and we set the parameters as
\[
\beta_1 =(\beta_{10},\beta_{11},\beta_{12}  )= (-0.5,0,1.44),  
\]
\[
\beta_2 =(\beta_{20},\beta_{21},\beta_{22}  )=(-3.2,0,1.44).  
\]

\item[General nonstationarity:]
A more general nonstationary covariance function can be obtained by setting 
\[
\kappa=0.8,
\]
\[
\beta_1 =(\beta_{10},\beta_{11},\beta_{12}  )= (-0.5,-1.2,1.44),  
\]
\[
\beta_2 =(\beta_{20},\beta_{21},\beta_{22}  )=(-3.2,-0.3,1.44).  
\]
\end{description}

Each dataset is generated on a regular latitude/longitude grid of size $50 \times 50 = 2{,}500$ on the sphere and then split into training and testing subsets, under two different scenarios: (a) a random sampling of $20\%$ of all locations as the test dataset; (b) ten randomly selected regions, each with a longitudinal band width of $0.4$ and a latitudinal band width of $0.2$, that sum up to approximately $20\%$ of all locations as the test dataset. The training dataset is then modeled by the three progressively more flexible covariance structures:
\begin{description}
\item[Isotropy:] unknown $\beta_{10},\beta_{20}$, with $\beta_{10} = \beta_{20}$ and fixed $\beta_{11}=\beta_{12}=\beta_{21}=\beta_{22}=\kappa=0$.
\item[Axial symmetry:] unknown $\beta_{10},\beta_{12},\beta_{20},\beta_{22}$, with fixed $\beta_{11}=\beta_{21}=\kappa=0$.
\item[General nonstationarity:] unknown $\beta_{10},\beta_{11},\beta_{12},\beta_{20},\beta_{21},\beta_{22},\kappa$.
\end{description}

Realizations of the isotropic, axially symmetric, and general nonstationary GPs are shown in Figure~\ref{fig:sim_truths}.
\begin{figure}
    \centering
	\begin{subfigure}{.3\textwidth}
	\includegraphics[width=\textwidth,trim=60 70 80 0,clip]{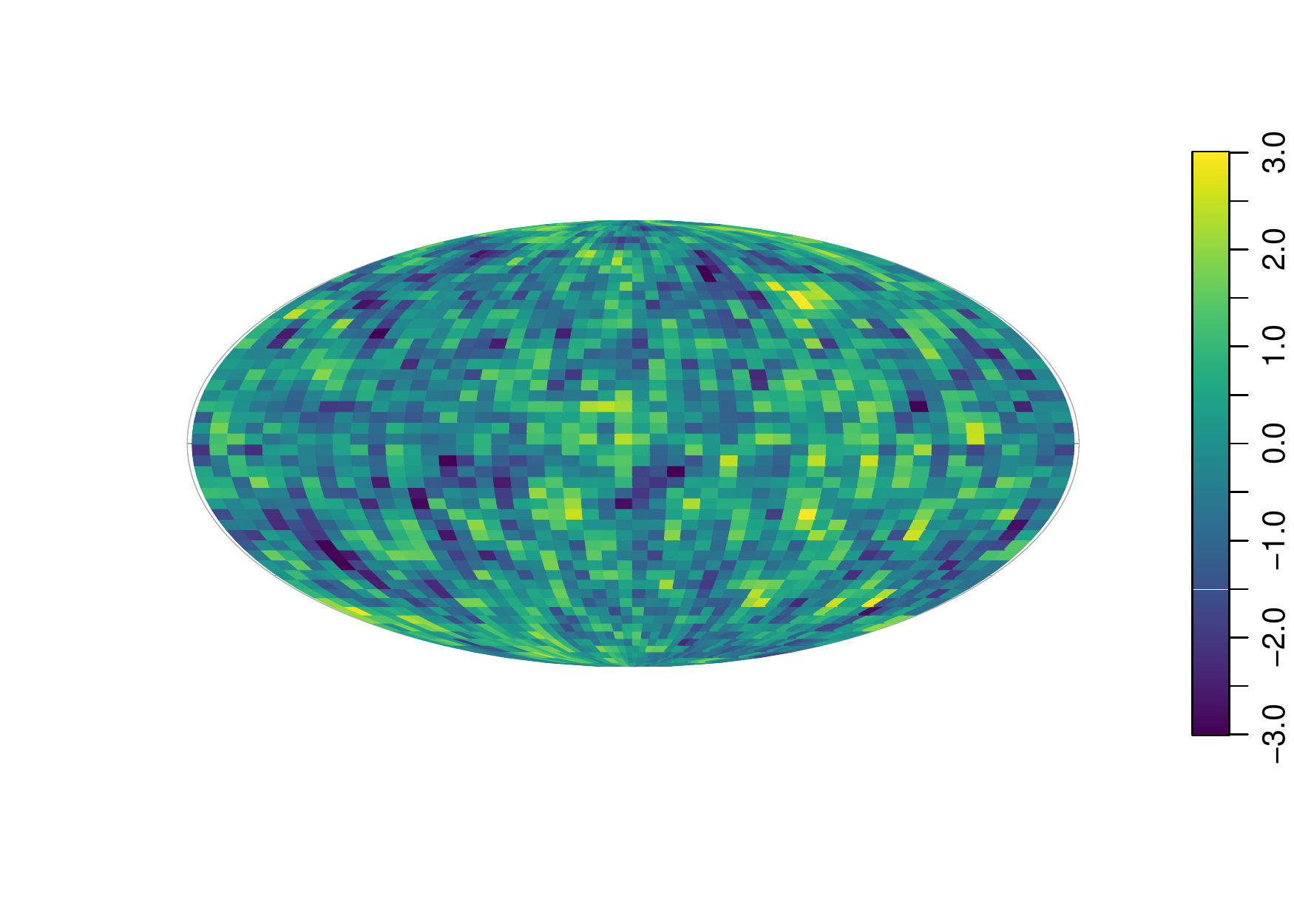}
	\caption{Isotropic}
	\end{subfigure}\hfill
	\begin{subfigure}{.3\textwidth}
	\includegraphics[width=\textwidth,trim=60 70 80 0,clip]{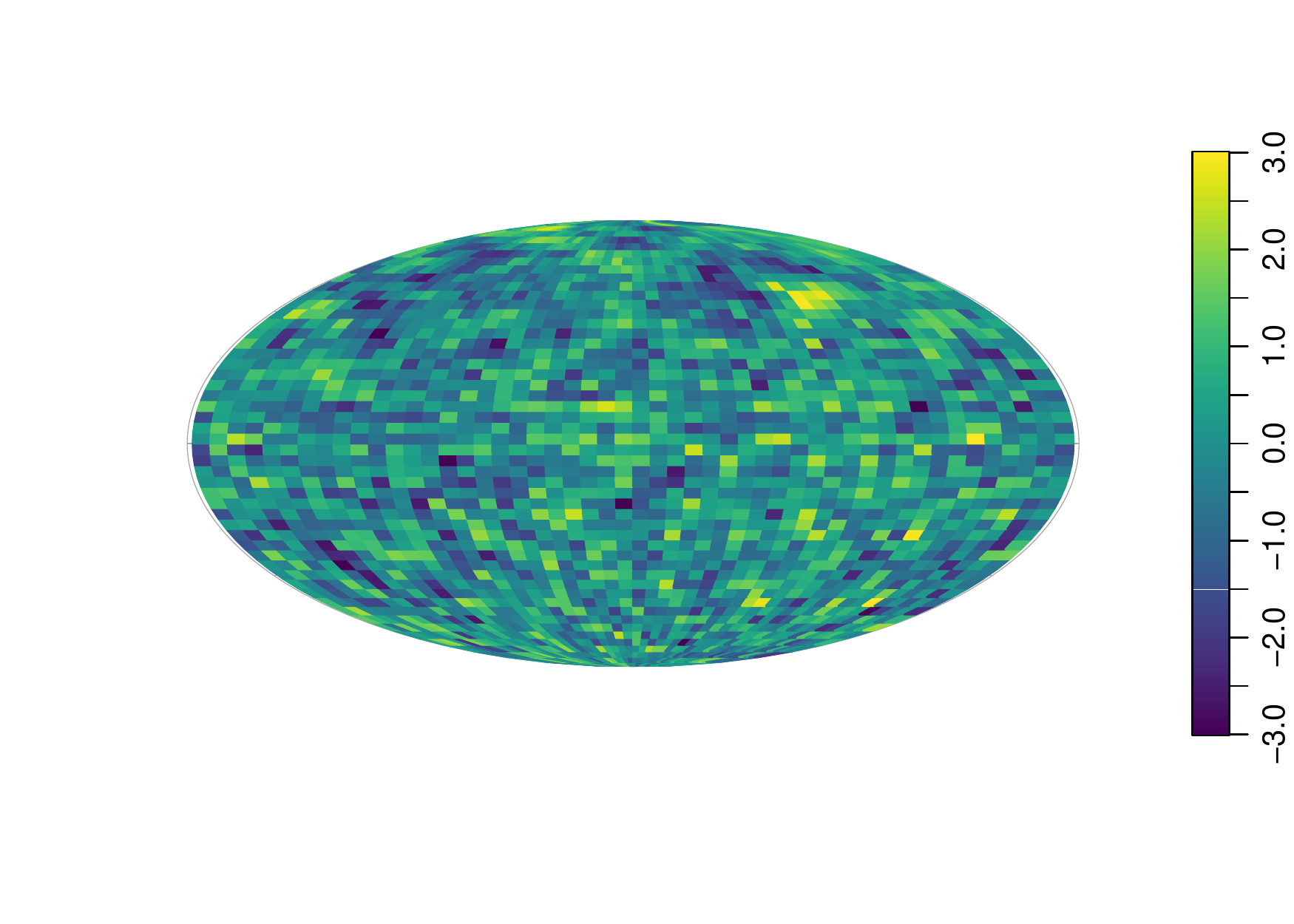}
	\caption{Axially symmetric}
	\end{subfigure}\hfill
	\begin{subfigure}{.3\textwidth}
	\includegraphics[width=\textwidth,trim=60 70 80 0,clip]{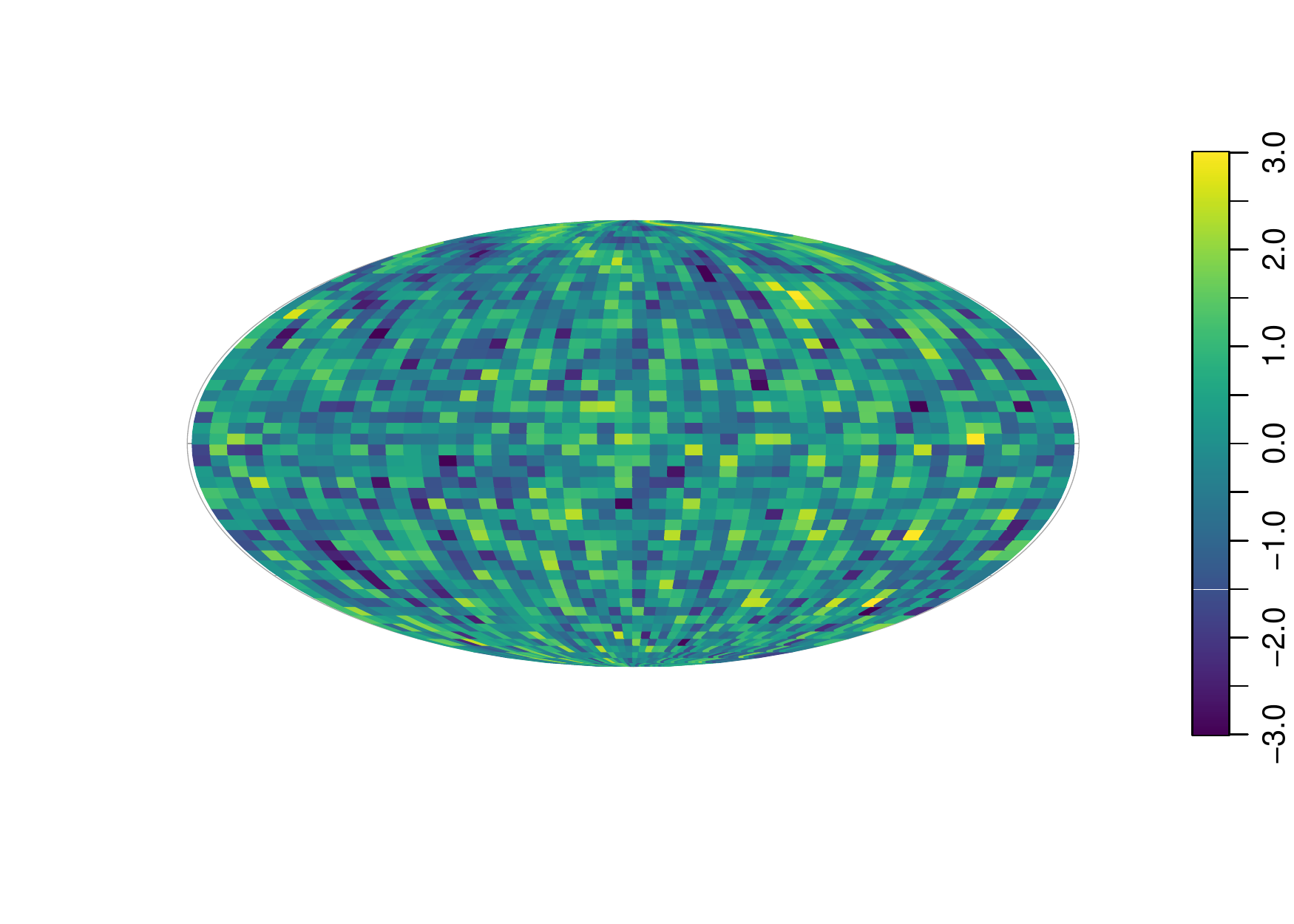}
	\caption{General nonstationary}
	\end{subfigure}\hfill
	\begin{subfigure}{.03\textwidth}
	\flushright
	\includegraphics[height=3cm,trim=450 35 0 35,clip]{plots/nonsta-true.pdf}
	\end{subfigure}
	
	\caption{Realizations of the isotropic, axially symmetric, and general nonstationary GPs over a $50 \times 50$ grid on the sphere}
	\label{fig:sim_truths}
\end{figure}
The prediction scores for the nine combinations of `true' and `assumed' covariance structures are summarized in Table \ref{table:comp} that include the mean absolute error (MAE), the root mean squared error (RMSE), the continuous ranked probability score (CRPS), and the energy score. MAE and RMSE consider only point predictions, CRPS evaluates marginal predictive distributions, and the energy score evaluates the joint predictive distribution for the entire test set. The variance and smoothness parameters of the Mat\'ern covariance function $\matern_\nu(r)$ are $\sigma(\bs) = 1$ and $\nu(\bs) = 0.5$, respectively, both considered as known. We ran an adaptive MCMC algorithm \citep{vihola2012robust} for $5{,}000$ iterations for the Bayesian inference of unknown parameters. Within MCMC, the Vecchia approximation in \eqref{eq:vecchia} of the likelihood is applied, which uses the maximum-minimum-distance ordering and the nearest-neighbor conditioning with a conditioning set size of ten. To obtain the posterior predictive distribution at the testing locations, we also used the Vecchia approximation (i.e., ordinary kriging with ten nearest neighbors), based on MCMC samples of the unknown parameters after a burn-in of size one thousand; the resulting predictive distribution is a mixture of Gaussians. For datasets generated by isotropic GPs, the prediction scores are very close under the assumptions of isotropic, axially symmetric, and nonstationary covariance structures. The difference becomes more pronounced when the datasets are generated from an axially symmetric GP, where the axially symmetric and the nonstationary structures have similar performance, both significantly better than that of the isotropic covariance structure. Furthermore, in modeling the general nonstationary GPs, there is a uniform improvement in all prediction scores when switching from the axially symmetric covariance structure to the general nonstationary covariance structure. Hence, the prediction accuracy is barely affected when using a general flexible model when the true model is simple and the number of optimization parameters is small, but large gains are possible with a more flexible model when the true dependence structure is more complicated.
\begin{table}
\begin{subtable}{\textwidth}
	\centering
    \captionsetup{justification=raggedright,singlelinecheck=false, font=normalsize}
    \caption{True model - Isotropic }
    \setlength{\tabcolsep}{0.18cm}
	\begin{tabular}{l||r|r|r|r||r|r|r|r}
		\hline
		& \multicolumn{4}{c||}{Random} & \multicolumn{4}{c}{Region}\\
		\hline
		& MAE &RMSE &CRPS &Energy &MAE &RMSE &CRPS &Energy\\
		\hline
		Isotropic  & 0.569 & 0.728 & 0.563 & 16.1 & 0.716 & 0.904 & 0.710 & 18.8\\
        Axially symmetric  & 0.567 & 0.727 & 0.556 & 15.9 & 0.716 & 0.904 & 0.705 & 18.6\\
        Nonstationary  & 0.568 & 0.728 & 0.551 & 15.7 & 0.716 & 0.904 & 0.698 & 18.4\\
		\hline
	\end{tabular}
\end{subtable}
\newline
\vspace{0.6cm}
\newline
\begin{subtable}{\textwidth}
	\centering
    \captionsetup{justification=raggedright,singlelinecheck=false, font=normalsize}
    \caption{True model - Axially symmetric }
    \setlength{\tabcolsep}{0.18cm}
	\begin{tabular}{l||r|r|r|r||r|r|r|r}
		\hline
		& \multicolumn{4}{c||}{Random} & \multicolumn{4}{c}{Region}\\
		\hline
		& MAE &RMSE &CRPS &Energy &MAE &RMSE &CRPS &Energy\\
		\hline
		Isotropic  & 0.754 & 0.961 & 0.751 & 21.3 & 0.768 & 0.968 & 0.761 & 20.1\\
        Axially symmetric  & 0.637 & 0.834 & 0.621 & 18.1 & 0.741 & 0.932 & 0.732 & 19.2\\
        Nonstationary  & 0.637 & 0.835 & 0.616 & 18.0 & 0.741 & 0.931 & 0.727 & 19.1\\
		\hline
	\end{tabular}
\end{subtable}
\newline
\vspace{0.6cm}
\newline
\begin{subtable}{\textwidth}
	\centering
    \captionsetup{justification=raggedright,singlelinecheck=false, font=normalsize}
    \caption{True model - Nonstationary }
    \setlength{\tabcolsep}{0.18cm}
	\begin{tabular}{l||r|r|r|r||r|r|r|r}
		\hline
		& \multicolumn{4}{c||}{Random} & \multicolumn{4}{c}{Region}\\
		\hline
		& MAE &RMSE &CRPS &Energy &MAE &RMSE &CRPS &Energy\\
		\hline
		Isotropic  & 0.734 & 0.938 & 0.730 & 20.8 & 0.777 & 0.973 & 0.773 & 20.3\\
        Axially symmetric  & 0.688 & 0.883 & 0.671 & 19.2 & 0.761 & 0.953 & 0.752 & 19.6\\
        Nonstationary  & 0.681 & 0.874 & 0.659 & 18.8 & 0.754 & 0.943 & 0.739 & 19.3\\
		\hline
	\end{tabular}
\end{subtable}
\caption{Prediction scores (lower is better), each averaged over five simulated datasets, for the nine different combinations of the true and assumed covariance structures. Test sets are selected as random locations (Random) or regions (Region) that amount to $20\%$ of the total dataset.}
\label{table:comp}
\end{table}

\section{Application to real data}
\label{sec:real_app}

Using the same three types of covariance structures used in Section~\ref{sec:simulation}, we model a precipitation dataset from the Community Earth System Model (CESM) Large Ensemble Project \citep{Kay2015}. After subsetting, the dataset contains precipitation rates (m/s) on July 1, 401, on a roughly a $2^\circ$ resolution in terms of longitude and latitude, totaling $144 \times 96 = 13{,}824$ locations on a spherical grid. 
We consider the standardized log-precipitation anomalies shown in Figure~\ref{fig:app_prcp}, which does not indicate any distinct mean structure. 
\begin{figure}
    \centering
  	\includegraphics[width = 0.5\textwidth,trim = 0mm 20mm 0mm 0mm, clip]{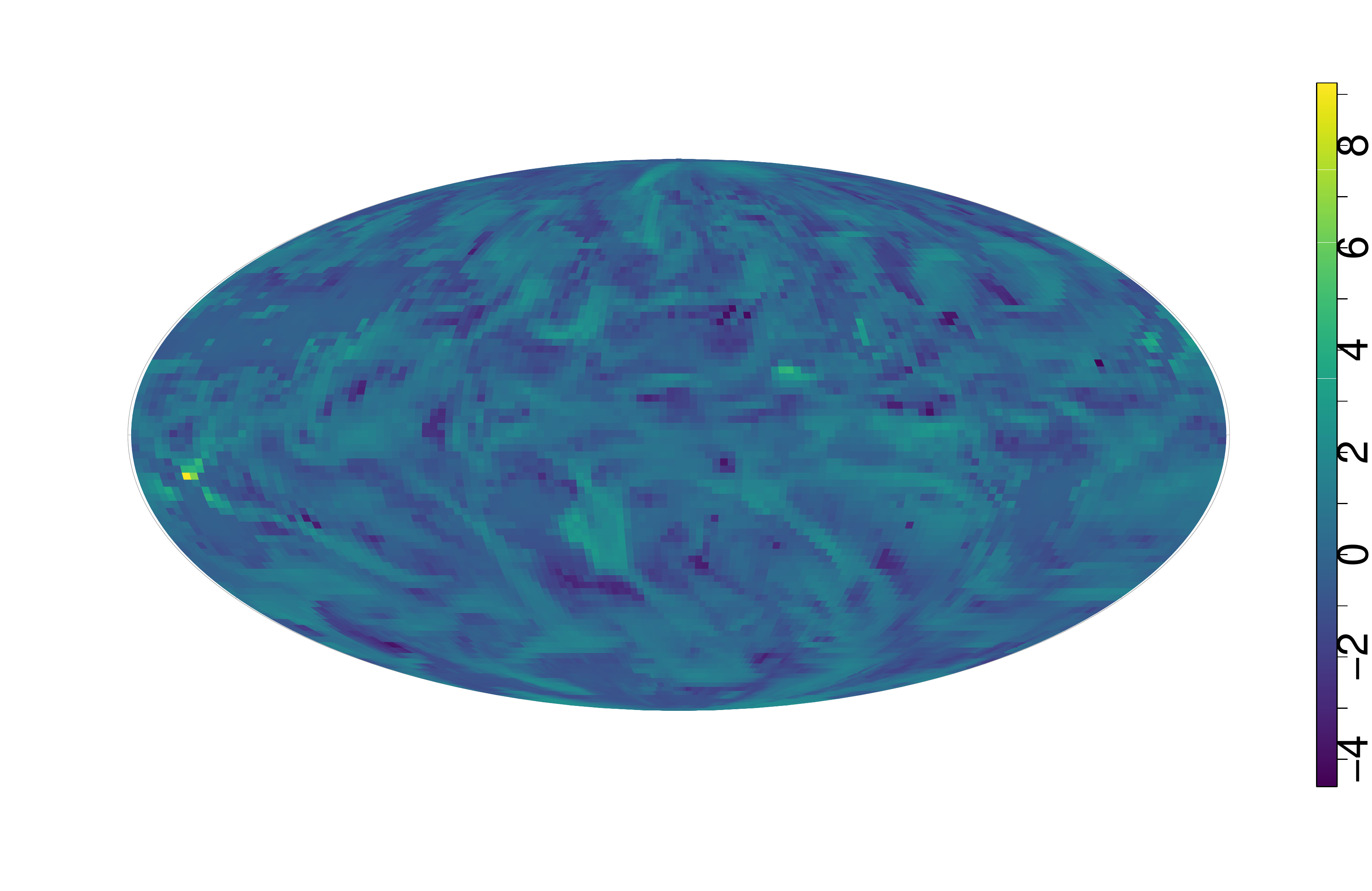}
	\caption{Visualization of the precipitation dataset after pre-processing}
	\label{fig:app_prcp}
\end{figure}
$20\%$ of the dataset is used for testing, selected either as random locations or as random regions, similar to Section~\ref{sec:simulation}. 

Model parameters have the same initializations under the three model assumptions, and the smoothness parameter $\nu$ for the CESM dataset is fixed at $2.5$ due to the increased smoothness compared with the simulated datasets in Figure~\ref{fig:sim_truths}. We also summarize the performance (e.g., scores) of posterior inference in Table~\ref{table:real_data_compare}.
\begin{table}
	\centering
    \setlength{\tabcolsep}{0.18cm}
	\begin{tabular}{l||r|r|r|r||r|r|r|r}
		\hline
		& \multicolumn{4}{c||}{Random} & \multicolumn{4}{c}{Region}\\
		\hline
		& MAE &RMSE &CRPS &Energy &MAE &RMSE &CRPS &Energy\\
		\hline
		Isotropic  & 0.383 & 0.549 & 0.376 & 27.1 & 0.756 & 0.967 & 0.749 & 43.1\\
        Axially symmetric  & 0.268 & 0.426 & 0.264 & 21.6 & 0.654 & 0.834 & 0.649 & 37.7\\
        Nonstationary  & 0.269 & 0.428 & 0.263 & 21.6 & 0.651 & 0.832 & 0.620 & 35.7\\
		\hline
	\end{tabular}
\caption{Prediction scores, using the precipitation data, of isotropic, axially symmetric, and general non-stationary structures. Test sets are selected as random locations (Random) or regions (Region) that amount to $20\%$ of the total dataset.}
\label{table:real_data_compare}
\end{table}
The isotropic covariance structure is significantly out-performed by the axially symmetric and the nonstationary structures while the difference between the latter two is indistinguishable. This indicates that using an isotropic covariance structure can be insufficient for practical modeling and the benefit of having a parsimonious flexible structure is likely to outweigh that of substituting the Euclidean distance with the chordal distance, especially when scalable GP approximations (e.g., the Vecchia approximation) are applied.

\section{Conclusions\label{sec:conclusions}}

We proposed a general approach for constructing nonstationary, locally anisotropic covariance functions on the sphere based on isotropic covariance functions in $\real^3$. Special parameterizations of the nonstationary covariance function that amount to isotropic and axially symmetric covariance structures were also discussed. Axially symmetric covariance functions are widely used in geospatial analysis of global data and their advantages over isotropic covariance structures were demonstrated with both simulated Gaussian random fields and a CESM dataset. The extra flexibility of our nonstationary parameterization also improved posterior inference compared with the axially symmetric structure, although the improvement was less significant. For large datasets on the sphere, straightforward computation of Gaussian probabilities is too computationally expensive, and so we used the Vecchia approximation to achieve faster parameter estimation and posterior inference. 


\footnotesize
\appendix
\section*{Acknowledgments}

MK and JC were partially supported by National Science Foundation (NSF) Grant DMS--1654083. MK was also partially supported by NSF Grant DMS--1521676.

\section{Proofs \label{app:proofs}}

\begin{proof}[Proof of Theorem \ref{theo:isotropy}]
If $\gamma_1 ({\sloc})=\gamma_2 ({\sloc}) \equiv \gamma$ is constant, then
$${\bf D(\gamma)}=\left( \begin{array}{ccc}
1 & 0 & 0 \\
0 & \gamma & 0 \\
0 & 0 & \gamma \end{array} \right),
$$
\begin{align*}
{\bf \tilde{\Sigma}(s)} &= \mathcal{R}_x (\kappa(s)) {\bf D(\gamma(s))} \mathcal{R}_x (\kappa(s))'\\
& =  \left( \begin{array}{ccc}
1 & 0 & 0 \\
0 & \cos \kappa(s) & -\sin \kappa(s) \\
0 & \sin \kappa(s) & \cos \kappa(s) \end{array} \right)
\left( \begin{array}{ccc}
1 & 0 & 0 \\
0 & \gamma & 0 \\
0 & 0 & \gamma \end{array} \right)
\left( \begin{array}{ccc}
1 & 0 & 0 \\
0 & \cos \kappa(s) & \sin \kappa(s) \\
0 & -\sin \kappa(s) & \cos \kappa(s) \end{array} \right) \\
& = \left( \begin{array}{ccc}
1 & 0 & 0 \\
0 & \gamma & 0 \\
0 & 0 & \gamma \end{array} \right).
\end{align*}
To compute ${\bf \Sigma(s)}=\mathcal{R}_z (l) \mathcal{R}_y (L) {\bf \tilde{\Sigma}(s)} \mathcal{R}_y (L)' \mathcal{R}_z (l)'$, we first compute
\begin{align*}
{\bf A}:=\mathcal{R}_z (l) \mathcal{R}_y (L) & = \left( \begin{array}{ccc}
\cos(l) & -\sin(l) & 0 \\
\sin(l) & \cos(l) & 0 \\
0 & 0 & 1 \end{array} \right)
\left( \begin{array}{ccc}
\cos(L) & 0 & \sin(L) \\
0 & 1 & 0 \\
-\sin(L) & 0 & \cos(L) \end{array} \right) \\
& = \left( \begin{array}{ccc}
\cos(l) \cos(L) & -\sin(l) & \cos(l) \sin(L) \\
\sin(l) \cos(L) & \cos(l) & \sin(l) \sin(L) \\
-\sin(L) & 0 & \cos(L) \end{array} \right),
\end{align*}
\begin{align*}
{\bf \Sigma(s)}={\bf A \tilde{\Sigma}(s) A'} & = \left( \begin{array}{ccc}
(1-\gamma)x^2+\gamma & (1-\gamma)xy & (1-\gamma)xz \\
(1-\gamma)xy & (1-\gamma)y^2+\gamma & (1-\gamma)yz \\
(1-\gamma)xz & (1-\gamma)yz & (1-\gamma)z^2+\gamma \end{array} \right) \\
& = (1-\gamma)\left( \begin{array}{c}
x \\
y \\
z \end{array} \right)
\left( \begin{array}{ccc}
x & y & z\end{array} \right) + \gamma {\bf I}_3\\
& = (1-\gamma){\bf \tilde{s} \tilde{s}'} + \gamma {\bf I}_3,
\end{align*}
where $x=\cos(L)\cos(l)$, $y=\cos(L)\sin(l)$, $z=\sin(L)$ are the $(x,y,z)$-coordinates of a 3-dimensional Cartesian coordinate system. Then
\begin{align*}
|{\bf \Sigma(s)}| & = det\{(1-\gamma)\ {\bf \tilde{s} \tilde{s}'} + \gamma {\bf I}_3\}\\
& = \gamma^3 \cdot det\left\{{\bf I}_3 +\frac{1-\gamma}{\gamma}{\bf \tilde{s} \tilde{s}'}\right\}\\
& = \gamma^3 \cdot det\left\{1 +\frac{1-\gamma}{\gamma}{\bf \tilde{s}' \tilde{s}}\right\}\\
& = \gamma^3 \cdot det\left\{1 +\frac{1-\gamma}{\gamma} \cdot 1\right\}\\
& = \gamma^2
\end{align*}
does not depend on $\sloc$. And for $i \neq j$,
$$({\bf \Sigma(s_i)+\Sigma(s_j)})^{-1} = \left((1-\gamma){\bf \tilde{s}_i \tilde{s}_i'} + (1-\gamma){\bf \tilde{s}_j \tilde{s}_j'} + 2\gamma {\bf I}_3 \right)^{-1}.$$
WLOG, we ignore the constant coefficients inside the inverse, and then
\begin{align*}
({\bf \Sigma(s_i)+\Sigma(s_j)})^{-1} & = \left({\bf \tilde{s}_i \tilde{s}_i'} + {\bf \tilde{s}_j \tilde{s}_j'} + {\bf I}_3 \right)^{-1}\\
& = ({\bf \tilde{s}_j \tilde{s}_j'} + {\bf I}_3)^{-1} - ({\bf \tilde{s}_j \tilde{s}_j'} + {\bf I}_3)^{-1} {\bf \tilde{s}_i}[1+{\bf \tilde{s}_i}'({\bf \tilde{s}_j \tilde{s}_j'} + {\bf I}_3)^{-1} {\bf \tilde{s}_i}]^{-1} {\bf \tilde{s}_i}' ({\bf \tilde{s}_j \tilde{s}_j'} + {\bf I}_3)^{-1}.
\end{align*}
Let $\bf B := ({\bf \tilde{s}_j \tilde{s}_j'} + {\bf I}_3)^{-1}$, and so
$$({\bf \Sigma(s_i)+\Sigma(s_j)})^{-1} = {\bf B} - {\bf B} {\bf \tilde{s}_i} (1+{\bf \tilde{s}_i' B \tilde{s}_i})^{-1}{\bf \tilde{s}_i' B}={\bf B} - \frac{{\bf B \tilde{s}_i \tilde{s}_i' B}}{1+{\bf \tilde{s}_i' B \tilde{s}_i}},$$
\begin{align*}
q^2({\sloc_i,s_j}) & \propto ({\bf \tilde{s}_i - \tilde{s}_j})' ({\bf \Sigma(s_i) + \Sigma(s_j)})^{-1} ({\bf \tilde{s}_i - \tilde{s}_j})\\
& \propto ({\bf \tilde{s}_i - \tilde{s}_j})' {\bf B} ({\bf \tilde{s}_i - \tilde{s}_j})-\frac{1}{1+{\bf \tilde{s}_i' B \tilde{s}_i}} ({\bf \tilde{s}_i - \tilde{s}_j})' {\bf B \tilde{s}_i \tilde{s}_i' B} ({\bf \tilde{s}_i - \tilde{s}_j}).
\end{align*}
So computation of $q({\sloc_i,s_j})$ only involves terms ${\bf \tilde{s}_i' B \tilde{s}_i}$, ${\bf \tilde{s}_j' B \tilde{s}_j}$ and ${\bf \tilde{s}_i' B \tilde{s}_j}$. Because
$${\bf B}=({\bf \tilde{s}_j \tilde{s}_j'} + {\bf I}_3)^{-1}={\bf I}_3-{\bf \tilde{s}_j} (1+{\bf \tilde{s}_j' \tilde{s}_j})^{-1} {\bf \tilde{s}_j'}={\bf I}_3-\frac{1}{2}{\bf \tilde{s}_j \tilde{s}_j'},$$
we have
\begin{align*}
    {\bf \tilde{s}_i' B \tilde{s}_i} & = {\bf \tilde{s}_i' \tilde{s}_i} - \frac{1}{2}({\bf \tilde{s}_i' \tilde{s}_j})^2 = 1-\frac{1}{2}({\bf \tilde{s}_i' \tilde{s}_j})^2\\
    {\bf \tilde{s}_j' B \tilde{s}_j} & = {\bf \tilde{s}_j' \tilde{s}_j} - \frac{1}{2}({\bf \tilde{s}_j' \tilde{s}_j})^2 = 1-\frac{1}{2}=\frac{1}{2}\\
    {\bf \tilde{s}_i' B \tilde{s}_j} & = {\bf \tilde{s}_i' \tilde{s}_j} - \frac{1}{2}({\bf \tilde{s}_i' \tilde{s}_j}) ({\bf \tilde{s}_j' \tilde{s}_j}) = {\bf \tilde{s}_i' \tilde{s}_j}-\frac{1}{2}{\bf \tilde{s}_i' \tilde{s}_j}=\frac{1}{2}{\bf \tilde{s}_i' \tilde{s}_j}.
\end{align*}
Further,
$${\bf \tilde{s}_i' \tilde{s}_j} = \left[({\bf \tilde{s}_i - \tilde{s}_j})'({\bf \tilde{s}_i - \tilde{s}_j})-{\bf \tilde{s}_i' \tilde{s}_i} - {\bf \tilde{s}_j' \tilde{s}_j}\right]/2 = \left[({\bf \tilde{s}_i - \tilde{s}_j})'({\bf \tilde{s}_i - \tilde{s}_j})-2\right]/2.$$
So $q({\sloc_i,\sloc_j})$ just depends on the distance $({\bf \tilde{s}_i - \tilde{s}_j})'({\bf \tilde{s}_i - \tilde{s}_j})$. For the normalization term $c({\sloc_i, \sloc_j})$, since we have proved that $|{\bf \Sigma(s_i)}|=|{\bf \Sigma(s_j)}| \equiv \gamma^2$,
\begin{align}
\begin{split}
    c({\sloc_i, \sloc_j}) & = |{\bf \Sigma(s_i)}|^{1/4} |{\bf \Sigma(s_j)}|^{1/4} |({\bf \Sigma(s_i)}+{\bf \Sigma(s_j)})/2|^{-1/2}\\
    & \propto |({\bf \Sigma(s_i)}+{\bf \Sigma(s_j)})^{-1}|^{1/2}\\
    & \propto \left(|({\bf \Sigma(s_i)}+{\bf \Sigma(s_j)})^{-1}| \cdot |({\bf \tilde{s}_i - \tilde{s}_j})'({\bf \tilde{s}_i - \tilde{s}_j})|/[({\bf \tilde{s}_i - \tilde{s}_j})'({\bf \tilde{s}_i - \tilde{s}_j})] \right)^{1/2}\\
    & \propto \left(|({\bf \Sigma(s_i)}+{\bf \Sigma(s_j)})^{-1}| \cdot det\{({\bf \tilde{s}_i - \tilde{s}_j})({\bf \tilde{s}_i - \tilde{s}_j})'\}/[({\bf \tilde{s}_i - \tilde{s}_j})'({\bf \tilde{s}_i - \tilde{s}_j})] \right)^{1/2}\\
    & \propto \left(det\left\{({\bf \Sigma(s_i)}+{\bf \Sigma(s_j)})^{-1} ({\bf \tilde{s}_i - \tilde{s}_j})({\bf \tilde{s}_i - \tilde{s}_j})'\right\}/[({\bf \tilde{s}_i - \tilde{s}_j})'({\bf \tilde{s}_i - \tilde{s}_j})] \right)^{1/2}\\
    & \propto \left(det\left\{({\bf \tilde{s}_i - \tilde{s}_j})' ({\bf \Sigma(s_i)}+{\bf \Sigma(s_j)})^{-1} ({\bf \tilde{s}_i - \tilde{s}_j})\right\}/[({\bf \tilde{s}_i - \tilde{s}_j})'({\bf \tilde{s}_i - \tilde{s}_j})] \right)^{1/2}\\
    & \propto \left\{ \frac{q^2({\sloc_i,\sloc_j})}{({\bf \tilde{s}_i - \tilde{s}_j})'({\bf \tilde{s}_i - \tilde{s}_j})} \right\}^{1/2}.
\end{split}\label{eq:cproof}
\end{align}
We have proved that $q({\sloc_i,\sloc_j})$ just depends on $({\bf \tilde{s}_i - \tilde{s}_j})'({\bf \tilde{s}_i - \tilde{s}_j})$, so $c({\sloc_i, \sloc_j})$ also only depends on the distance $({\bf \tilde{s}_i - \tilde{s}_j})'({\bf \tilde{s}_i - \tilde{s}_j})$.\\
Overall, we can show
$$\rho_{NS}({\sloc_i,\sloc_j})=c({\sloc_i,\sloc_j})\rho(q({\sloc_i,\sloc_j}))$$ only depends on the distance $({\bf \tilde{s}_i - \tilde{s}_j})'({\bf \tilde{s}_i - \tilde{s}_j})$, where $\rho(q)$ is a valid isotropic correlation function. So $\rho_{NS}({\sloc_i,\sloc_j})$ is isotropic.
\end{proof}

\begin{proof}[Proof of Theorem \ref{theo:symmetry}]
If $\kappa({\sloc}) \equiv 0$ and $\gamma_1(\cdot)$, $\gamma_2(\cdot)$ depend on $L$ only, then $\mathcal{R}_x (\kappa(s)) \equiv \mathcal{R}_x (0) = {\bf I}_3$. Then
$${\bf \tilde{\Sigma}(s)} = {\bf D(\gamma(s))}=\left( \begin{array}{ccc}
1 & 0 & 0 \\
0 & \gamma_1(L) & 0 \\
0 & 0 & \gamma_2(L) \end{array} \right)
=\left(\begin{array}{ccc}
1 & 0 & 0 \\
0 & \gamma_1(L) & 0 \\
0 & 0 & \gamma_1(L) \end{array} \right)+
\left( \begin{array}{ccc}
0 & 0 & 0 \\
0 & 0 & 0 \\
0 & 0 & \gamma_2(L)-\gamma_1(L) \end{array} \right).$$
Due to the results in Theorem \ref{theo:isotropy}, we have
$${\bf \Sigma}({\sloc})=\mathcal{R}_z (l) \mathcal{R}_y (L) {\bf \tilde{\Sigma}(s)} \mathcal{R}_y (L)' \mathcal{R}_z (l)'=(1-\gamma_1(L)){\bf \tilde{s} \tilde{s}'} + \gamma_1(L) {\bf I}_3 + (\gamma_2(L)-\gamma_1(L)) {\bf \tilde{s}^*} ({\bf \tilde{s}^*})',$$
where 
$$
{\bf \tilde{s}^*}=\left( \begin{array}{c}
\cos(l) \sin(L) \\
\sin(l) \sin(L) \\
\cos(L) \end{array} \right)
\ \ , \ \
({\bf {\tilde{s}^*}})' ({\bf \tilde{s}^*})=1.
$$
Thus
\begin{align*}
    |{\bf \Sigma(s)}| & = det\{(1-\gamma_1(L))\ {\bf \tilde{s} \tilde{s}'} + \gamma_1(L) {\bf I}_3+ (\gamma_2(L)-\gamma_1(L)) {\bf \tilde{s}^*} ({\bf \tilde{s}^*})'\}\\
    & = \gamma_1(L)^3 \cdot det \left\{ \frac{1-\gamma_1(L)}{\gamma_1(L)} {\bf \tilde{s} \tilde{s}'} + \frac{\gamma_2(L)-\gamma_1(L)}{\gamma_1(L)}{\bf \tilde{s}^*} ({\bf \tilde{s}^*})'+{\bf I}_3 \right\}\\
    & = \gamma_1(L)^3 \cdot det \left\{
    \left( \begin{array}{cc}
       \frac{1-\gamma_1(L)}{\gamma_1(L)} {\bf \tilde{s}} & \frac{\gamma_2(L)-\gamma_1(L)}{\gamma_1(L)}{\bf \tilde{s}^*} \end{array} \right)
    \left(\begin{array}{c}
       {\bf \tilde{s}'} \\
       ({\bf \tilde{s}^*})' \end{array} \right)
        +{\bf I}_3 \right\}\\
    & = \gamma_1(L)^3 \cdot det \left\{
    \left(\begin{array}{c}
       {\bf \tilde{s}'} \\
       ({\bf \tilde{s}^*})' \end{array} \right)
    \left( \begin{array}{cc}
       \frac{1-\gamma_1(L)}{\gamma_1(L)} {\bf \tilde{s}} & \frac{\gamma_2(L)-\gamma_1(L)}{\gamma_1(L)}{\bf \tilde{s}^*} \end{array} \right) +{\bf I}_2 \right\}\\
    & = \gamma_1(L)^3 \cdot \left| \begin{array}{cc}
       \frac{1-\gamma_1(L)}{\gamma_1(L)}+1 & 2\sin(L)\cos(L)\\ 2\sin(L)\cos(L) & \frac{\gamma_2(L)-\gamma_1(L)}{\gamma_1(L)}+1
       \end{array} \right|\\
    & = \gamma_1(L)\gamma_2(L) - 4\gamma_1(L)^3 \sin^2(L) \cos^2(L)
\end{align*}
only depend on $L$. WLOG, ignore $\gamma_1(L)$, $\gamma_2(L)$ again (they only depend on $L$),
\begin{align*}
({\bf \Sigma(s_i)+\Sigma(s_j)})^{-1} & = \left[{\bf \tilde{s}_i \tilde{s}_i'} + {\bf \tilde{s}_i^*} ({\bf \tilde{s}_i^*})' + {\bf \tilde{s}_j \tilde{s}_j'} + {\bf \tilde{s}_j^*} ({\bf \tilde{s}_j^*})' + {\bf I}_3 \right]^{-1}\\
& = \left[ \left( \begin{array}{cc}
       {\bf \tilde{s}_i} & {\bf \tilde{s}_i^*} \end{array} \right)
    \left(\begin{array}{c}
       {\bf \tilde{s}_i'} \\
       ({\bf \tilde{s}_i^*})' \end{array} \right) + 
      \left( \begin{array}{cc}
       {\bf \tilde{s}_j} & {\bf \tilde{s}_j^*} \end{array} \right)
    \left(\begin{array}{c}
       {\bf \tilde{s}_j'} \\
       ({\bf \tilde{s}_j^*})' \end{array} \right) + {\bf I}_3 \right]^{-1}\\
& = {\bf V}^{-1} - {\bf V}^{-1} \left(\begin{array}{cc}
       {\bf \tilde{s}_i} & {\bf \tilde{s}_i^*} \end{array} \right)
       \left[ {\bf I}_2 +  \left(\begin{array}{c}
       {\bf \tilde{s}_i'} \\
       ({\bf \tilde{s}_i^*})' \end{array} \right) {\bf V}^{-1} \left(\begin{array}{cc}
       {\bf \tilde{s}_i} & {\bf \tilde{s}_i^*} \end{array} \right) \right]^{-1}
       \left(\begin{array}{c}
       {\bf \tilde{s}_i'} \\
       ({\bf \tilde{s}_i^*})' \end{array} \right) {\bf V}^{-1}\\
& = {\bf V}^{-1} - {\bf V}^{-1} \left(\begin{array}{cc}
       {\bf \tilde{s}_i} & {\bf \tilde{s}_i^*} \end{array} \right)
       \left[ {\bf I}_2 +  \left(\begin{array}{cc}
       {\bf \tilde{s}_i'} {\bf V}^{-1} {\bf \tilde{s}_i} & {\bf \tilde{s}_i'} {\bf V}^{-1} {\bf \tilde{s}_i^*} \\
       ({\bf \tilde{s}_i^*})' {\bf V}^{-1} {\bf \tilde{s}_i} & ({\bf \tilde{s}_i^*})' {\bf V}^{-1} {\bf \tilde{s}_i^*} \end{array} \right) \right]^{-1}
       \left(\begin{array}{c}
       {\bf \tilde{s}_i'} \\
       ({\bf \tilde{s}_i^*})' \end{array} \right) {\bf V}^{-1},
\end{align*}
where 
$$
{\bf V} =  \left( \begin{array}{cc}
       {\bf \tilde{s}_j} & {\bf \tilde{s}_j^*} \end{array} \right)
    \left(\begin{array}{c}
       {\bf \tilde{s}_j'} \\
       ({\bf \tilde{s}_j^*})' \end{array} \right) + {\bf I}_3.
$$
Then
\begin{align*}
q^2({\sloc_i,\sloc_j}) \propto & ({\bf \tilde{s}_i}-{\bf \tilde{s}_j})' ({\bf \Sigma(s_i)+\Sigma(s_j)})^{-1} ({\bf \tilde{s}_i}-{\bf \tilde{s}_j})\\
\propto & ({\bf \tilde{s}_i}-{\bf \tilde{s}_j})' {\bf V}^{-1} ({\bf \tilde{s}_i}-{\bf \tilde{s}_j}) - ({\bf \tilde{s}_i}-{\bf \tilde{s}_j})' {\bf V}^{-1} \left(\begin{array}{cc}
       {\bf \tilde{s}_i} & {\bf \tilde{s}_i^*} \end{array} \right)
       \left[ {\bf I}_2 +  \left(\begin{array}{cc}
       {\bf \tilde{s}_i'} {\bf V}^{-1} {\bf \tilde{s}_i} & {\bf \tilde{s}_i'} {\bf V}^{-1} {\bf \tilde{s}_i^*} \\
       ({\bf \tilde{s}_i^*})' {\bf V}^{-1} {\bf \tilde{s}_i} & ({\bf \tilde{s}_i^*})' {\bf V}^{-1} {\bf \tilde{s}_i^*} \end{array} \right) \right]^{-1}\cdot\\
       & \left(\begin{array}{c}
       {\bf \tilde{s}_i'} \\
        ({\bf \tilde{s}_i^*})' \end{array} \right) {\bf V}^{-1}({\bf \tilde{s}_i}-{\bf \tilde{s}_j}).
\end{align*}
Because
\begin{align*}
{\bf V}^{-1} & = {\bf I}_3 - \left(\begin{array}{cc}
       {\bf \tilde{s}_j} & {\bf \tilde{s}_j^*} \end{array} \right)
       \left[ {\bf I}_2 + \left(\begin{array}{cc}
       1 & {\bf \tilde{s}_j}' {\bf \tilde{s}_j^*}\\
       ({\bf \tilde{s}_j})' {\bf \tilde{s}_j} & 1 \end{array} \right) \right]^{-1}
       \left(\begin{array}{c}
       {\bf \tilde{s}_j'} \\
       ({\bf \tilde{s}_j^*})' \end{array} \right)\\
& = {\bf I}_3 - \frac{1}{4-({\bf \tilde{s}_j}' {\bf \tilde{s}_j^*})^2} \left(\begin{array}{cc}
       {\bf \tilde{s}_j} & {\bf \tilde{s}_j^*} \end{array} \right)
       \left[ \left(\begin{array}{cc}
       2 & -{\bf \tilde{s}_j}' {\bf \tilde{s}_j^*}\\
       -({\bf \tilde{s}_j})' {\bf \tilde{s}_j} & 2 \end{array} \right) \right]^{-1}
       \left(\begin{array}{c}
       {\bf \tilde{s}_j'} \\
       ({\bf \tilde{s}_j^*})' \end{array} \right),
\end{align*}
we can figure out that the computation of $q^2({\sloc_i, \sloc_j})$ only involves the following types of terms
$$
\left\{
\begin{aligned}
{\bf \tilde{s}_i}' {\bf \tilde{s}_i}  = & 1\\
{\bf \tilde{s}_i}' {\bf \tilde{s}_i^*}  =  & {\bf \tilde{s}_i}' \cdot \left[ (\tan(s_{i2})){\bf \tilde{s}_i} +  \left(0,0,\cos(s_{i2})-\frac{\sin^2(s_{i2})}{\cos(s_{i2})}\right)' \right] = \tan(s_{i2}) + \sin(s_{i2})\left[\cos(s_{i2})-\frac{\sin^2(s_{i2})}{\cos(s_{i2})}\right]\\
({\bf \tilde{s}_i^*})' {\bf \tilde{s}_i}^*  = & 1\\
({\bf \tilde{s}_i^*})' {\bf \tilde{s}_j^*}  =  &\left[ (\tan(s_{i2})){\bf \tilde{s}_i} +  \left(0,0,\cos(s_{i2})-\frac{\sin^2(s_{i2})}{\cos(s_{i2})} \right)' \right]' \cdot \left[ (\tan(s_{j2})){\bf \tilde{s}_j} +  \left(0,0,\cos(s_{j2})-\frac{\sin^2(s_{j2})}{\cos(s_{j2})} \right)' \right]\\
        = & \tan(s_{i2}) \tan(s_{j2}) ({\bf \tilde{s}_i}' {\bf \tilde{s}_j}) + \tan(s_{i2}) \sin(s_{i2})\left[\cos(s_{j2})-\frac{\sin^2(s_{j2})}{\cos(s_{j2})}\right] \\
       & + \tan(s_{j2}) \sin(s_{j2})         \left[\cos(s_{i2})-\frac{\sin^2(s_{i2})}{\cos(s_{i2})}\right]
        + \left[\cos(s_{i2})-\frac{\sin^2(s_{i2})}{\cos(s_{i2})}\right]\left[\cos(s_{j2})-\frac{\sin^2(s_{j2})}{\cos(s_{j2})}\right]\\
{\bf \tilde{s}_i}' {\bf \tilde{s}_j^*}  =  &\tan(s_{j2})({\bf \tilde{s}_i}' {\bf \tilde{s}_j}) + \sin(s_{i2})\left[\cos(s_{j2})-\frac{\sin^2(s_{j2})}{\cos(s_{j2})}\right]\\
{\bf \tilde{s}_j}' {\bf \tilde{s}_i^*}  =  &\tan(s_{i2})({\bf \tilde{s}_i}' {\bf \tilde{s}_j}) + \sin(s_{j2})\left[\cos(s_{i2})-\frac{\sin^2(s_{i2})}{\cos(s_{i2})}\right]\\
{\bf \tilde{s}_i}' {\bf \tilde{s}_j}  =  &\left[({\bf \tilde{s}_i - \tilde{s}_j})'({\bf \tilde{s}_i - \tilde{s}_j})-2\right]/2.
\end{aligned}
\right.
$$
We can change the index $i$ to $j$ for the first 3 terms and they are still valid. Thus these values only depend on $s_{i2}$, $s_{j2}$ and ${\bf \tilde{s}_i}' {\bf \tilde{s}_j}$, and ${\bf \tilde{s}_i}' {\bf \tilde{s}_j}$ can be expressed in terms of the distance $({\bf \tilde{s}_i - \tilde{s}_j})'({\bf \tilde{s}_i - \tilde{s}_j})$. The computation of $q^2({\sloc_i,\sloc_j})$ only depends on the distance $({\bf \tilde{s}_i - \tilde{s}_j})'({\bf \tilde{s}_i - \tilde{s}_j})$ and the longitudes $s_{i2}$, $s_{j2}$. Similar to \eqref{eq:cproof} in the proof of Theorem \ref{theo:isotropy}, we can also show that $c({\sloc_i,\sloc_j})$ is a function of $({\bf \tilde{s}_i - \tilde{s}_j})'({\bf \tilde{s}_i - \tilde{s}_j})$, $s_{i2}$ and $s_{j2}$. Then $\rho_{NS}({\sloc_i,\sloc_j})=c({\sloc_i,\sloc_j})\rho(q({\sloc_i,\sloc_j})):=\rho_A ({\bf \tilde{s}_i - \tilde{s}_j}, s_{i2}, s_{j2})$, so it is axially symmetric.
\end{proof}

\footnotesize
\bibliographystyle{apalike}
\bibliography{mendeley, additionalrefs}

\end{document}